\lstdefinelanguage{Scala}%
{morekeywords={abstract,case,catch,char,class,%
    def,else,extends,final,%
    if,import,%
    match,module,new,null,object,override,package,private,protected,%
    public,return,super,this,throw,trait,try,type,val,var,with,implicit,%
    macro,sealed,%
  },%
  sensitive,%
  morecomment=[l]//,%
  morecomment=[s]{/*}{*/},%
  morestring=[b]",%
  morestring=[b]',%
  showstringspaces=false%
}[keywords,comments,strings]%
\newcommand{\ie}{{\em i.e.,~}}
\newcommand{\ok}{{~\textbf{ok}}}
\newcommand{\gap}{\quad\quad}
\newcommand{\ba}{\begin{array}}
\newcommand{\ea}{\end{array}}
\newcommand{\ei}{\end{array}}
\newcommand{\bcases}{\left\{\begin{array}{ll}}
\newcommand{\ecases}{\end{array}\right.}
\newcommand{\typ}{:}
\newcommand{\seq}[1]{\overline{#1}}
\newcommand{\aframe}[3]{\langle #1, #2 \rangle ^{#3}}
\newcommand{\obj}[2]{\langle #1, #2 \rangle}
\newcommand{\reduce}[4]{#1, #2 \;\leadsto\; #3, #4}
\newcommand{\fsreduce}[4]{#1, #2 \;\twoheadrightarrow\; #3, #4}
\newcommand{\fsreducebreak}[4]{#1, #2 \\ \;\twoheadrightarrow\; #3, #4}
\newcommand{\freduce}[4]{#1, #2 \;\longrightarrow\; #3, #4}
\theoremstyle{plain}
\newtheorem{theorem}{Theorem}
\theoremstyle{definition}
\newtheorem{definition}{Definition}
\theoremstyle{remark}
\begin{document}

\title{A Formal Model for Direct-style Asynchronous Observables}
\titlerunning{A Formal Model for Direct-style Asynchronous Observables}

\author{Philipp Haller\inst{1}
\and Heather Miller\inst{2}}

\institute{
  KTH Royal Institute of Technology, Sweden\\
  \email{phaller@kth.se}
\and
  EPFL, Switzerland\\
  \email{heather.miller@epfl.ch}
}

\authorrunning{Haller and Miller}

\maketitle

\begin{abstract}

Languages like F\#, C\#, and recently also Scala, provide ``Async''
programming models which aim to make asynchronous programming easier by
avoiding an inversion of control that is inherent in callback-based
programming models. This paper presents a novel approach to integrate the
Async model with observable streams of the Reactive Extensions model. Reactive
Extensions are best-known from the .NET platform, and widely-used
implementations of its programming model exist also for Java, Ruby, and other
languages. This paper contributes a formalization of the unified ``Reactive
Async'' model in the context of an object-based core calculus. Our formal
model captures the essence of the protocol of asynchronous observables using a
heap evolution property. We prove a subject reduction theorem; the theorem
implies that reduction preserves the heap evolution property. Thus, for
well-typed programs our calculus ensures the protocol of asynchronous
observables.

\end{abstract}

\section{Introduction}

Asynchronous programming has been a challenge for a long time. A multitude of
programming models have been proposed that aim to simplify the task.
Interestingly, there are elements of a convergence arising, at least with
respect to the basic building blocks: futures and promises have begun to play
a more and more important role in a number of languages like Java, C++,
ECMAScript, and Scala.

The Async extensions of F\#~\cite{SymePL11}, C\#~\cite{FormalizingAsync}, and
Scala~\cite{ScalaAsyncSIP} provide language support for programming with
futures (or ``tasks''), by avoiding an inversion of control that is inherent
in designs based on callbacks. However, these extensions are so far only
applicable to futures or future-like abstractions. In this paper we present
an integration of the Async model with a richer underlying abstraction, the
observable streams of the Reactive Extensions model.~\cite{RxCACM} A reactive
stream is a stream of {\em observable events} which an arbitrary number of
{\em observers} can subscribe to. The set of possible event patterns of
observable streams is strictly greater than those of futures. A stream can (a)
produce zero or more regular events, (b) complete normally, or (c) complete
with an error (it's even possible for a stream to never complete.) Given the
richer substrate of reactive streams, the Async model has to be generalized in
several dimensions.

This paper makes the following contributions:
\begin{itemize}

  \item A design of a new programming model, RAY, which integrates the Async
        model and the Reactive Extensions model (the name RAY is inspired by the main
        constructs, \texttt{rasync}, \texttt{await} and \texttt{yield});

  \item A formal model of the proposed programming model. Our operational
        semantics extends the formal model presented in~\cite{FormalizingAsync}
        for C\#'s async/await to observable streams. The formal model captures the
        essence of the protocol of asynchronous observables using a heap evolution
        property;

  \item A proof of subject reduction for the presented core calculus. The theorem
        implies that reduction preserves the heap evolution property. Thus, for
        well-typed programs our calculus ensures the protocol of asynchronous
        observables.

\end{itemize}

The rest of the paper is organized as follows. The following
Section~\ref{sec:background} provides background on Scala Async, an implementation of the
Async model, and the Reactive Extensions model. Section~\ref{sec:model}
introduces our unified Reactive Async model. In
Section~\ref{sec:formalization} we present a formalization of the Reactive
Async model in the context of an object-based core calculus. Section~\ref{sec:correctness}
presents correctness properties including subject reduction.

\section{Background}\label{sec:background}

\vspace{0.2cm}
\subsection{Scala Async}\label{sec:scala-async}

Scala Async provides constructs that aim to facilitate programming with
asynchronous events in Scala. The introduced constructs are inspired to a
large extent by extensions that have been introduced in C\# version 5~\cite{Hejlsberg:2011:CPL} in a
similar form. The goal is to enable expressing
asynchronous code in ``direct style'', \ie in a familiar blocking style
where suspending operations look as if they were blocking while at the same
time using efficient non-blocking APIs under the hood.

In Scala, an immediate consequence is that non-blocking code using Scala's
futures API~\cite{ScalaFuturesSIP} does not have to resort to (a) low-level
callbacks, or (b) higher-order functions like \verb|map| and \verb|flatMap|. While the
latter have great composability properties, they can appear unnatural when
used to express the regular control flow of a program.

For example, an efficient non-blocking composition of asynchronous web service
calls using futures can be expressed as follows in Scala:

\lstset{numbers=left,xleftmargin=2em}
\begin{lstlisting}
val futureDOY: Future[Response] =
  WS.url("http://api.day-of-year/today").get

val futureDaysLeft: Future[Response] =
  WS.url("http://api.days-left/today").get

futureDOY.flatMap { doyResponse =>
  val dayOfYear = doyResponse.body
  futureDaysLeft.map { daysLeftResponse =>
    val daysLeft = daysLeftResponse.body
    Ok("" + dayOfYear + ": " +
            daysLeft + " days left!")
  }
}
\end{lstlisting}

Line 1 and 4 define two futures obtained as results of asynchronous requests
to two hypothetical web services using an API inspired by the Play! Framework (for
the purpose of this example, the definition of type \verb|Response| is
unimportant).

This can be expressed more intuitively in direct style using Scala Async as
follows (this example is adopted from the SIP proposal~\cite{ScalaAsyncSIP}):

\begin{lstlisting}
val respFut = async {
  val dayOfYear = await(futureDOY).body
  val daysLeft = await(futureDaysLeft).body
  Ok("" + dayOfYear + ": " +
          daysLeft + " days left!")
}
\end{lstlisting}

The \verb|await| on line 2 causes the execution of the \verb|async|
block to suspend until \verb|futureDOY| is completed (with a successful result
or with an exception). When the future is completed successfully, its result
is bound to the \verb|dayOfYear| local variable, and the execution of the
\verb|async| block is resumed. When the future is completed with an exception
(for example, because of a timeout), the invocation of \verb|await| re-throws
the exception that the future was completed with. In turn, this completes
future \verb|respFut| with the same exception. Likewise, the \verb|await| on
line 3 suspends the execution of the \verb|async| block until
\verb|futureDaysLeft| is completed.

The main methods provided by Scala Async, \verb|async| and \verb|await|, have
the following type signatures:

\lstset{numbers=none,xleftmargin=0em}
\begin{lstlisting}
def async[T](body: => T): Future[T]
def await[T](future: Future[T]): T
\end{lstlisting}

Given the above definitions, \verb|async| and \verb|await| ``cancel each other
out:''

\begin{lstlisting}
await(async { <expr> }) = <expr>
\end{lstlisting}

This ``equation'' paints a grossly over-simplified picture, though, since the
actual operational behavior is much more complicated: \verb|async| typically
schedules its argument expression to run asynchronously on a thread pool;
moreover, \verb|await| may only be invoked within a syntactically enclosing
\verb|async| block.

\subsection{Reactive Extensions}

The Rx programming model is based on two interface traits: \verb|Observable|
and \verb|Observer|. \verb|Observable| represents observable streams, \ie
streams that produce a sequence of events. These events can be observed by
registering an \verb|Observer| with the \verb|Observable|. The \verb|Observer|
provides methods which are invoked for each of the kinds of events produced by
the \verb|Observable|. In Scala, the two traits can be defined as shown in
Figure~\ref{fig:observable-observer}.

\begin{figure}[ht!]
  \centering
  \lstset{numbers=none,xleftmargin=0em}
  \begin{lstlisting}
  trait Observable[T] {
    def subscribe(obs: Observer[T]): Closable
  }

  trait Observer[T] extends (Try[T] => Unit) {
    def apply(tr: Try[T]): Unit
    def onNext(v: T) = apply(Success(v))
    def onFailure(t: Throwable) = apply(Failure(t))
    def onDone(): Unit
  }
  \end{lstlisting}
  \caption{The \texttt{Observable} and \texttt{Observer} traits.}
  \label{fig:observable-observer}
\end{figure}

The idea of the \verb|Observer| is that it can respond to three different
kinds of events, (1) the next regular event (\verb|onNext|), (2) a failure
(\verb|onFailure|), and (3) the end of the observable stream (\verb|onDone|).
Thus, the two traits constitute a variation of the classic subject/observer
pattern~\cite{EugsterFGK03}. Note that \verb|Observable|'s \verb|subscribe|
method returns a \verb|Closable|; it has only a single abstract \verb|close|
method which removes the subscription from the observable. The next listing
shows an example implementation.

Note that in our Scala version the \verb|Observer| trait extends the function
type \verb|Try[T] => Unit|. \verb|Try[T]| is a simple container type which
supports heap-based exception handling (as opposed to the traditional stack-based
exception handling using expressions like \verb|try-catch-finally|.)
There are two subclasses of \verb|Try[T]|: \verb|Success| (encapsulating a
value of type \verb|T|) and \verb|Failure| (encapsulating an exception). Given
the above definition, a concrete \verb|Observer| only has to provide
implementations for the \verb|apply| and \verb|onDone| methods. Since
\verb|apply| takes a parameter of type \verb|Try[T]| its implementation
handles the \verb|onNext| and \verb|onFailure| events all at once (in Scala,
this is tyically done by pattern matching on \verb|tr| with cases for
\verb|Success| and \verb|Failure|).

The \verb|Observer| and \verb|Observable| traits are used as follows. For
example, here is a factory method for creating an observable from a text input
field of typical GUI toolkits (this example is adapted from~\cite{RxCACM}):

\lstset{numbers=none,xleftmargin=0em}
\begin{lstlisting}
def textChanges(tf: JTextField): Observable[String] =
  new ObservableBase[String] {
    def subscribe(o: Observer[String]) = {
      val l = new DocumentListener {
        def changedUpdate(e: DocumentEvent) = {
          o.onNext(tf.getText())
        }
      }
      tf.addDocumentListener(l)
      new Closable() {
        def close() = {
          tf.removeDocumentListener(l)
        }
      }
    }
  }
\end{lstlisting}

This newly-defined \verb|textChanges| combinator can be used with other Rx
combinators as follows:

\begin{lstlisting}
textChanges(input)
.flatMap(word => completions(word))
.subscribe(observeChanges(output))
\end{lstlisting}

We start with the observable created using the \verb|textChanges| method from
above. Then we use the \verb|flatMap| combinator (called \verb|Select| in C\#)
to transform the observable into a new observable which is a stream of
completions for a given word (a string). On the resulting observable we call
\verb|subscribe| to register a consumer: \verb|observeChanges| creates an
observer which outputs all received events to the \verb|output| stream. (The
shown example suffers from a problem explained in~\cite{RxCACM} which
motivates the use of an additional \verb|Switch| combinator which is omitted
here for brevity.)

\section{The Reactive Async Model}\label{sec:model}

This Section provides an (example-driven) overview of the Reactive Async Model
which integrates the Async Model and the Reactive Extensions Model.

The basic idea is to generalize the Async model, so that it can be used not
only with futures, but also with observable streams. This means, we need
constructs that can create observables, as opposed to only futures (like
\verb|async|), and we need ways to wait for more events than just the
completion of a future. Essentially, it should be possible to await all kinds
of events produced by an observable stream. Analogous to \verb|await| which
waits for the completion event of a future, we introduce variations like
\verb|awaitNext| and \verb|awaitNextOrDone| to express waiting for the events
of an observable stream.

\subsection{Example}

The following example shows how to await a fixed number of events of a stream
in the Reactive Async Model:

\begin{lstlisting}
val obs = rasync {
  var events = List[Int]()
  while (events.size < 5) {
    val event = awaitNext(stream)
    events = event :: events
  }
  Some(events)
}
\end{lstlisting}

Note that we are using the \verb|rasync| construct; it is a generalized
version of the \verb|async| construct of Section~\ref{sec:scala-async} which
additionally supports methods to await events of observable streams.

In the above example, the invocation of \verb|awaitNext| suspends the
\verb|rasync| block until the producer of \verb|stream| calls \verb|onNext| on
its observers. The argument of this \verb|onNext| call (the next event) is
returned as a result from \verb|awaitNext|. The result of \verb|rasync|,
\verb|obs|, has type \verb|Observable[List[Int]]|. Once the body of
\verb|rasync| has been fully evaluated, \verb|obs| publishes two events:
first, an \verb|onNext| event which carries \verb|events| (the list with five
elements), and second, an \verb|onDone| event; it is not possible for
\verb|obs| to publish further events.

Note that the result of an \verb|rasync| block has a type of the form
\verb|Option[T]|; in the case where this optional value is empty
(\verb|None|), only an \verb|onDone| event is published as a result of fully
evaluating the \verb|rasync| block. (It is, however, possible to publish other
events beforehand, as shown in the following sections.) Otherwise, the
semantics of \verb|rasync| is analogous to the behavior of a regular
\verb|async| block: when its body has been fully evaluated, the future, which
is the result of \verb|async|, is completed and further changes to the state
of the future are impossible.

\subsection{Awaiting Stream Termination}

Sometimes it is not known statically how many events a stream might still
publish. One might want to collect all events until the stream is done
(finished publishing events). In this case it is necessary to have a way to
wait for either of two events: the stream publishes a next event, or the
stream is done. This can be supported using a method \verb|awaitNextOrDone|
which returns an \verb|Option[T]| when applied to an \verb|Observable[T]|:

\begin{lstlisting}
rasync {
  var events: List[Int] = List()
  var next: Option[Int] = awaitNextOrDone(stream)
  while (next.nonEmpty) {
    events = next.get :: events
    next = awaitNextOrDone(stream)
  }
  Some(events)
}
\end{lstlisting}

In the above example, the body of \verb|rasync| repeatedly waits for the given
\verb|stream| to publish either a next event or to reach its end, using
\verb|awaitNextOrDone|. As long as the \verb|stream| continues to publish
events (in which case \verb|next| of type \verb|Option[Int]| is non-empty),
each event is prepended to the \verb|events| list; this list is the single
event that the observable which is, in turn, created by \verb|rasync|
publishes (once the body of \verb|rasync| has been fully evaluated).

\subsection{Creating Complex Streams}

The streams created by \verb|rasync| in the previous sections are rather
simple: after consuming events from other streams only a single interesting
event is published on the created stream (by virtue of reaching the end of the
\verb|rasync| block). In this section, we explain how more complex streams can
be created in the Reactive Async Model.

\begin{figure}[ht!]
  \centering
  \begin{lstlisting}
  val forwarder = rasync[Int] {
    var next: Option[Int] =
      awaitNextOrDone(stream)

    while (next.nonEmpty) {
      yieldNext(next.get)
      next = awaitNextOrDone(stream)
    }
    None
  }
  \end{lstlisting}
  \caption{A simple forwarder stream.}
  \label{fig:forwarder}
\end{figure}

Suppose we would like to create a stream which simply publishes an event for
each event observed on another \verb|stream|. In this case, the constructs we
have seen so far are not sufficient, since an arbitrary number of events have
to be published from within the \verb|rasync| block. This is where the new
method \verb|yieldNext| comes in: it publishes the next event to the stream
returned by \verb|rasync|. Our simple forwarder example can then be expressed
as shown in Figure~\ref{fig:forwarder}.

Note that in the above example, the result of the body of the \verb|rasync|
block is \verb|None|; consequently, the resulting \verb|forwarder| stream only
publishes an \verb|onDone| event when \verb|rasync|'s body has been fully
evaluated. In this case, it is assumed that the only ``interesting'' non-done
events of \verb|forwarder| are published using \verb|yieldNext|.

\begin{figure}
  \centering
$\ba[t]{l@{\hspace{2mm}}l}
p    ::=  \seq{cd}~e                                       & \mbox{program}             \\
cd   ::=  \texttt{class}~C~\{\seq{fd}~\seq{md}\}           & \mbox{class declaration}   \\
fd   ::=  \texttt{var}~f: \sigma                           & \mbox{field declaration}   \\
md   ::=  \texttt{def}~m(\seq{x: \sigma}): \tau = e        & \mbox{method declaration}  \\
\sigma,\tau ::=                                            & \mbox{type}                \\
\gap ~|~  \gamma                                           & \gap\mbox{value type}      \\
\gap ~|~  \rho                                             & \gap\mbox{reference type}  \\
\gamma ::=                                                 & \mbox{value type} \\
\gap ~|~  \texttt{Boolean}                                 & \gap\mbox{boolean}         \\
\gap ~|~  \texttt{Int}                                     & \gap\mbox{integer}         \\
\rho ::=                                                   & \mbox{reference type}      \\
\gap ~|~  C                                                & \gap\mbox{class type}      \\
\gap ~|~  \texttt{Observable$[\sigma]$}                    & \gap\mbox{observable type} \\
\ea$
  \caption{Core language syntax. $C$ is a class name, $f,m$ are field and
    method names.}
  \label{fig:lang-syntax}
\end{figure}

\section{Formalization}\label{sec:formalization}

One of the contributions of this paper is an operational semantics of the
proposed programming model. Our operational semantics generalizes the formal
model presented in~\cite{FormalizingAsync}.

\subsection{Syntax}

Figure~\ref{fig:lang-syntax} and Figure~\ref{fig:lang-syntax-2} show the
syntax of our core language. Note that programs are written in {\em A-normal
form} (ANF) which forces all subexpressions to be named; this simplifies the
presentation of the operational semantics. Note that our core language does
not support any form of subtyping; thus, class declarations do not specify a
superclass. This is adopted from~\cite{FormalizingAsync}; the presented
reactive features are orthogonal to subtyping.

\begin{figure}
  \centering
$\ba[t]{l@{\hspace{2mm}}l}
e    ::=                                                 & \mbox{expressions}      \\
\gap ~|~ \uline{b}                                       & \gap\mbox{boolean}      \\
\gap ~|~ \uline{i}                                       & \gap\mbox{integer}      \\
\gap ~|~ x                                               & \gap\mbox{variable}     \\
\gap ~|~ \texttt{null}                                   & \gap\mbox{null}         \\
\gap ~|~ \texttt{if}~(x)~\{e\}~\texttt{else}~\{e'\}      & \gap\mbox{condition}    \\
\gap ~|~ \texttt{while}~(x)~\{e\}                        & \gap\mbox{while loop}   \\
\gap ~|~ x.f                                             & \gap\mbox{selection}    \\
\gap ~|~ x.f = y                                         & \gap\mbox{assignment}   \\
\gap ~|~ x.m(\seq{y})                                    & \gap\mbox{invocation}   \\
\gap ~|~ \texttt{new C}(\seq{y})                         & \gap\mbox{instance creation}   \\
\gap ~|~ \texttt{let}~x = e~\texttt{in}~e'               & \gap\mbox{let binding}  \\
\gap ~|~ \texttt{rasync[}\sigma\texttt{]}(\bar{y})~\{e\} & \gap\mbox{observable creation} \\
\gap ~|~ \texttt{await}(x)                               & \gap\mbox{await next event}    \\
\gap ~|~ \texttt{yield}(x)                               & \gap\mbox{yield event}  \\
\ea$
  \caption{RAY expressions.}
  \label{fig:lang-syntax-2}
\end{figure}

A RAY program consists of a collection of class definitions, as well as
the body (an expression) of a ``main'' method. A class \texttt{C} has (a
possibly empty) sequence of  public fields and methods, $\seq{f}$ with types
$\seq{\sigma}$, and $\seq{md}$, respectively. Methods are
public with return type $\tau$, a type which may represent either a value or
reference type, and their body is an expression $e$.

The \texttt{Observable[$\sigma$]} family of types is used to model the
generic nature of observables. They represent observables such that an
\texttt{rasync} block can await their next event using \texttt{await}.
Conversely, inside the body $e$ of an expression
$\texttt{rasync[}\sigma\texttt{]}(\bar{y})~\{e\}$, \texttt{yield} can be
used to publish events of type $\sigma$.



\subsection{Operational Semantics}

\subsubsection{Notation}

A heap, denoted $H$, partially maps object identifiers (ranged over by $o$) to
heap objects, denoted $\langle\texttt{C}, FM\rangle$, representing a pair of
type $\texttt{C}$ and a field map, $FM$. A field map partially maps fields $f$
to values (ranged over by $v$), where $v$ can be either an integer, a boolean,
\texttt{null}, or an object identifier (the address of an object in the heap).

Frames have the form $\aframe L e l$ where $L$ maps local variables to
their values, $e$ is an expression, and $l$ is a label. A
label is either $s$ denoting a regular, synchronous frame, or $a(o, \bar{p})$ denoting
an asynchronous frame; in this case, $o$ is the heap address of a
corresponding observable object $\langle \texttt{Observable$[\sigma]$},
running(\bar{F}, \bar{S}) \rangle$; $\bar{p}$ is a sequence of object identifiers of
observables that observable $o$ has subscribed to. $\bar{F}$ is a set of
asynchronous frames, namely, all observables that are currently suspended awaiting $o$
to publish a new event. $\bar{S}$ is a set of subscribers which are explained
below.

There are three kinds of transition rules. The first kind goes from a heap and
a frame to a new heap and a new frame (simple right arrow). The second kind
goes from a heap and a frame stack to a new heap and a new frame stack (double
right arrow). The third kind goes from a heap and a set of frame stacks to a
new heap and a new set of frame stacks (squiggly right arrow).

\subsubsection{Synchronous Transition Rules}

\begin{figure*}[ht!]
  \centering

\infrule[\textsc{E-Var}]
{ 
}
{ \freduce H {\aframe L {\texttt{let}~x = y~\texttt{in}~e} l} H {\aframe {L[x \mapsto L(y)]} e l}
}

\vspace{0.3cm}

\infrule[\textsc{E-Field}]
{  H(L(y)) = \obj {\rho} {FM}
}
{ \freduce H {\aframe L {\texttt{let}~x = y.f~\texttt{in}~e} l} H {\aframe {L[x \mapsto FM(f)]} e l}
}

\vspace{0.3cm}

\infrule[\textsc{E-While}]
{  F' =
    \begin{cases}
      \aframe L {\texttt{let}~x' = e~\texttt{in}~\texttt{let}~x = \texttt{while}~(y)~\{e\}~\texttt{in}~e'} l \quad \text{if}~L(y) = \texttt{true} \andalso x' \notin dom(L) \\
      \aframe L {\texttt{let}~x = \texttt{false}~\texttt{in}~e'} l \quad \text{if}~L(y) = \texttt{false} \\
    \end{cases} \\
}
{ \freduce H {\aframe L {\texttt{let}~x = \texttt{while}~(y)~\{e\}~\texttt{in}~e'} l} H {F'}
}

\vspace{0.3cm}

\infrule[\textsc{E-Cond}]
{  F' =
    \begin{cases}
      \aframe L {\texttt{let}~x = t~\texttt{in}~u} l \quad \text{if}~L(y) = \texttt{true} \\
      \aframe L {\texttt{let}~x = s~\texttt{in}~u} l \quad \text{if}~L(y) = \texttt{false} \\
    \end{cases} \\
}
{ \freduce H {\aframe L {\texttt{let}~x = \texttt{if}~(y)~\{t\}~\texttt{else}~\{s\}~\texttt{in}~u} l} H {F'}
}

\vspace{0.3cm}

\infrule[\textsc{E-Assign}]
{ L(x) = o \andalso H(o) = \obj {\sigma} {FM} \andalso H' = H[o \mapsto \obj {\sigma} {FM[f \mapsto L(y)]}] \\
}
{ \freduce H {\aframe L {\texttt{let}~x' = x.f = y~\texttt{in}~e} l} {H'} {\aframe L {\texttt{let}~x' = y~\texttt{in}~e} l}
}

\vspace{0.3cm}

\infrule[\textsc{E-New}]
{ fields(C) = \bar{f} \andalso o \notin dom(H) \\
  H' = H[o \mapsto \obj C {\bar{f} \mapsto L(\bar{y})}] \\
}
{ \freduce H {\aframe L {\texttt{let}~x =~\texttt{new}~C(\bar{y})~\texttt{in}~e} l} {H'} {\aframe {L[x \mapsto o]} e l}
}

  \caption{Simple frame transition rules.}
  \label{fig:opsem-rules-simple}
\end{figure*}

Figure~\ref{fig:opsem-rules-simple} shows simple frame transition rules. Note
that all transition rules preserve the labels of frames. Rule \textsc{E-Var}
looks up the value $L(y)$ of $y$ in local variable mapping $L$; the local
variable mapping of the target frame maps $x$ to $L(y)$; reduction continues
with expression $e$. Rule \textsc{E-Field} looks up the value of field $y.f$
using $L$ and $H$; as before, reduction continues with $e$. Rule
\textsc{E-While} looks up the value of the condition variable in the local
variable mapping and continues reduction accordingly. Note that our core
language does not distinguish between expressions and statements; thus, also
the while loop is an expression which reduces to the constant \verb|false|
when the loop condition is \verb|false|. Rule \textsc{E-Cond} is standard.
Rule \textsc{E-Assign} combines the heap $H$, local variable mapping $L$, and
field mapping $FM$ in the natural way for field assignment. Rule
\textsc{E-New} creates a new instance of class $C$, assigning the constructor
arguments $\bar{y}$ to the fields of the new instance.

\begin{figure*}[ht!]
  \centering

\infrule[\textsc{E-Method}]
{ H(L(y)) = \obj {\rho} {FM} \andalso mbody(\rho, m) = (\seq{x}) \rightarrow e' \\
  L' = [\bar{x} \mapsto L(\bar{z}), \texttt{this} \mapsto L(y)]
}
{ \fsreduce H {\aframe L {\texttt{let}~x = y.m(\bar{z})~\texttt{in}~e} l \circ FS} H {\aframe {L'} {e'} s \circ {\aframe L e l}_x \circ FS}
}

\vspace{0.3cm}

\infrule[\textsc{E-Return}]
{
}
{ \fsreduce H {\aframe L y s \circ {\aframe {L'} e l}_x \circ FS} H {\aframe {L'[x \mapsto L(y)]} e l \circ FS}
}

\vspace{0.3cm}

\infrule[\textsc{E-Frame}]
{ \freduce H F {H'} {F'}
}
{ \fsreduce H {F \circ FS} {H'} {F' \circ FS}
}

  \caption{Method call and return transition rules.}
  \label{fig:opsem-rules-sync}
\end{figure*}

Figure~\ref{fig:opsem-rules-sync} shows the transition rules for method call
and return. Rule \textsc{E-Method} evaluates a method invocation. The run-time
type of the receiver, $\rho$, is looked up in heap $H$. Using the auxiliary
function $mbody$ we look up the body of method $m$ in $\rho$. To evaluate the
method body, a new frame with synchronous label $s$ is created and pushed on
top of the frame stack. Importantly, the caller frame (with expression $e$) is
annotated with variable $x$; this annotation is used for the transfer of the
return value as follows. Rule \textsc{E-Return} shows how a value is returned
from a method invocation to the caller. A method call returns when the
expression of its frame has been reduced to a variable $y$. The method's frame
is popped off the frame stack, and the frame of the caller is replaced with a
frame that maps variable $x$ to the value of $y$. Crucially, the frame of the
caller is annotated with $x$. Finally, rule \textsc{E-Frame} transitions a
{\em frame stack} $F \circ FS$ by transitioning frame $F$.

\subsubsection{Asynchronous Transition Rules}

\begin{figure*}[ht!]
  \centering

\infrule[\textsc{E-RAsync}]
{  L(\bar{y}) = \bar{p} \andalso o \notin dom(H) \\
   H_0 = H[o \mapsto \langle \texttt{Observable$[\sigma]$}, running(\epsilon, \epsilon) \rangle] \\
   \forall p_i \in \bar{p}.~H(p_i) = \langle \psi_i, running(\bar{F}_i, \bar{S}_i) \rangle \\
   \forall i \in 1 \ldots n.~H_i = H_{i-1}[p_i \mapsto \langle \psi_i, running(\bar{F}_i, \langle o, [] \rangle :: \bar{S}_i) \rangle] \\
   H' = H_n
}
{ \fsreducebreak H {\aframe L {\texttt{let}~x = \texttt{rasync[}\sigma\texttt{]}(\bar{y})~\{e\}~\texttt{in}~t} l \circ FS}
    {H'} {\aframe L e {a(o, \bar{p})} \circ \aframe {L[x \mapsto o]} t l \circ FS}
}

\vspace{0.3cm}

\infrule[\textsc{E-Yield}]
{  H(o) = \langle \texttt{Observable$[\sigma]$}, running(\bar{F}, \bar{S}) \rangle \\
   \bar{R} = resume(\bar{F}, \texttt{Some}(L(z))) \andalso Q = \{ R \circ \epsilon ~|~ R \in \bar{R} \} \\
   \bar{S'} = \{ \langle {o'}, L(z) :: q \rangle ~|~ \langle {o'}, q \rangle \in \bar{S} \} \\
   H' = H[o \mapsto \langle \texttt{Observable$[\sigma]$}, running(\epsilon, \bar{S'}) \rangle]
}
{ \reduce H {\{ \aframe L {\texttt{yield}(z)} {a(o, \bar{p})} \circ FS\} \cup P}
    {H'} {\{ \aframe L {z} {a(o, \bar{p})} \circ FS \} \cup P \cup Q}
}

\vspace{0.3cm}

\infrule[\textsc{E-RAsync-Return}]
{  H(o) = \langle \texttt{Observable$[\sigma]$}, running(\bar{F}, \bar{S}) \rangle \\
   \bar{R} = resume(\bar{F}, \texttt{None}) \andalso Q = \{ R \circ \epsilon ~|~ R \in \bar{R} \} \\
   H_0 = H[o \mapsto \langle \texttt{Observable$[\sigma]$}, done(\bar{S}) \rangle] \\
   \forall i \in 1 \ldots n.~H_i = H_{i-1}[p_i \mapsto unsub(o, p_i, H)]
}
{ \reduce H {\{ \aframe L {x} {a(o, \bar{p})} \circ FS \} \cup P}
    {H_n} {\{ FS \} \cup P \cup Q}
}

\vspace{0.3cm}

\infrule[\textsc{E-Await1}]
{  F = \aframe L {\texttt{let}~x =~\texttt{await}(y)~\texttt{in}~t} {a(o, \bar{p})} \andalso L(y) = o' \\
   H(o') = \langle \texttt{Observable$[\sigma]$}, running(\bar{F}, \bar{S}) \rangle \andalso
   \bar{S} = \bar{R} \uplus \{ \langle o, [] \rangle \} \\
   H' = H[o' \mapsto \langle \texttt{Observable$[\sigma]$}, running(F :: \bar{F}, \bar{R}) \rangle]
}
{ \fsreduce H {F \circ FS} {H'} {FS}
}

\vspace{0.3cm}

\infrule[\textsc{E-Await2}]
{  L(y) = o' \andalso H(o') = \langle \texttt{Observable$[\sigma]$}, running(\bar{F}, \bar{S}) \rangle \andalso
   \bar{S} = \bar{R} \uplus \{ \langle o, q :: v \rangle \} \\
   H' = H[o' \mapsto \langle \texttt{Observable$[\sigma]$}, running(\bar{F}, \bar{R} \cup \{ \langle o, q \rangle \}) \rangle]
}
{ \fsreduce H {\aframe L {\texttt{let}~x =~\texttt{await}(y)~\texttt{in}~t} {a(o, \bar{p})} \circ FS}
    {H'} {\aframe {L[x \mapsto \texttt{Some}(v)]} t {a(o, \bar{p})} \circ FS}
}

\vspace{0.3cm}

\infrule[\textsc{E-Await3}]
{  L(y) = o' \andalso H(o') = \langle \texttt{Observable$[\sigma]$}, done(\bar{S}) \rangle \andalso
   \bar{S} = \bar{R} \uplus \{ \langle o, q :: v \rangle \} \\
   H' = H[o' \mapsto \langle \texttt{Observable$[\sigma]$}, done(\bar{R} \cup \{ \langle o, q \rangle \}) \rangle]
}
{ \fsreduce H {\aframe L {\texttt{let}~x =~\texttt{await}(y)~\texttt{in}~t} {a(o, \bar{p})} \circ FS}
    {H'} {\aframe {L[x \mapsto \texttt{Some}(v)]} t {a(o, \bar{p})} \circ FS}
}

  \caption{Asynchronous transition rules.}
  \label{fig:opsem-rules-async}
\end{figure*}

Figure~\ref{fig:opsem-rules-async} shows the asynchronous transition rules.
These rules transition either between frame stacks ($\twoheadrightarrow$) or
processes ($\leadsto$).

Rule \textsc{E-RAsync} evaluates the creation of a new observable using an
expression of the form $\texttt{rasync[}\sigma\texttt{]}(\bar{y})~\{e\}$. The
type argument $\sigma$ determines the type of the observable,
$\texttt{Observable$[\sigma]$}$. The value arguments $\bar{y}$ refer to
observables at addresses $\bar{p}$ in heap $H$.

Recall the representation of observables in the heap. A newly-created
observable has the form $\langle \texttt{Observable$[\sigma]$},
running(\epsilon, \epsilon) \rangle$. In general, a running observable (\ie
an observable that has not terminated, yet) has the form $\langle \psi,
running(\bar{F}, \bar{S}) \rangle$. Each {\em waiter} $F \in \bar{F}$ is an
asynchronous frame of an observable waiting to receive an event. Each {\em
subscriber} $S \in \bar{S}$ is a pair $\langle o, q \rangle$; $o$ is the
address of a subscribed observable that is running, but currently not waiting
to receive an event; $q$ is a queue of events received by $o$ (but not yet
processed).

Rule \textsc{E-RAsync} adds the newly-created observable $o$ as a subscriber
(with an empty queue) to each observable $p_i \in \bar{p}$. Finally, a new
asynchronous frame $\aframe L e {a(o, \bar{p})}$ is pushed on to the frame
stack.

\begin{figure*}
  \centering

\begin{equation*}
resume(\bar{F}, v) = \{ \aframe {L[x \mapsto v]} t {a(o, \bar{p})} ~|~ \aframe L {\texttt{let}~x = \texttt{await}(y)~\texttt{in}~t} {a(o, \bar{p})} \in \bar{F} \}
\end{equation*}

\begin{equation*}
unsub(\bar{S}, o) = \{ \langle o' , q \rangle ~|~ \langle o' , q \rangle \in \bar{S} \land o' \neq o \}
\end{equation*}

  \caption{Auxiliary functions.}
  \label{fig:aux-funs}
\end{figure*}

Rule \textsc{E-Yield} implements the built-in \verb|yield| expression which
publishes a new event to all waiters and subscribers of observable $o$.
Waiters $F \in \bar{F}$ are resumed by creating a set of new frame stacks $Q$
based on frames created using the $resume$ auxiliary function (defined in
Figure~\ref{fig:aux-funs}). Rule \textsc{E-Yield} resumes waiters with value
$L(z)$. The state of observable $o$ is updated such that the set of waiters is
empty. The \verb|yield| expression itself is reduced to $z$.

Rule \textsc{E-RAsync-Return} transitions a process where an asynchronous
frame $\aframe L {x} {a(o, \bar{p})}$ is on top of one of the frame stacks.
Since the frame's expression has been reduced to just a variable $x$, this
means the corresponding observable $o$ is done publishing events at this
point. Therefore, the waiters $\bar{F}$ are resumed with result \verb|None|.
The state of observable $o$ is updated to $\langle
\texttt{Observable$[\sigma]$}, done(\bar{S}) \rangle$; value $done(\bar{S})$
indicates that $o$ has transitioned to the terminated state; events published
to subscribers ($\bar{S}$) remain available for consumption, though (see rules for
\verb|await| below). Finally, $o$ unsubscribes from all observables $\bar{p}$.
$unsub(o, p, H)$ is defined as follows; the function makes use of
$unsub(\bar{S}, o)$ which is defined in Figure~\ref{fig:aux-funs}.

\begin{equation*}
unsub(o, p, H) =
  \begin{cases}
    \langle \psi , running(\bar{F}, unsub(\bar{S}, o)) \rangle & \text{if } H(p) = \langle \psi , running(\bar{F}, \bar{S}) \rangle \\
    \langle \psi , done(unsub(\bar{S}, o)) \rangle & \text{if } H(p) = \langle \psi , done(\bar{S}) \rangle \\
  \end{cases} \\
\end{equation*}

Rules \textsc{E-Await1}, \textsc{E-Await2}, and \textsc{E-Await3} implement
the built-in \verb|await| expression. Rule \textsc{E-Await1} adds the
asynchronous frame $F$ of observable $o$ to the waiters of observable $o'$ in
the case where there is no event from $o'$ ready to be consumed by $o$. Rule
\textsc{E-Await2} handles the dual case where observable $o$ immediately
receives an event from  observable $o'$; the subscribers of $o'$ are updated
accordingly in the target heap $H'$. Rule \textsc{E-Await3} handles the case
where observable $o'$ is in a terminated state $done(\bar{S})$. Importantly, a
subscriber queue in $\bar{S}$ may contain an event to be consumed by the
\verb|await|-invoking observable $o$. In case the corresponding subscriber
queue is empty, observable $o$ suspends analogous to rule \textsc{E-Await1}.


\begin{figure*}[ht!]
  \centering

\infrule[\textsc{E-Exit}]
{
}
{ \reduce H {\{ \epsilon \} \cup P} H P
}

\vspace{0.3cm}

\infrule[\textsc{E-Schedule}]
{ \fsreduce H {FS} {H'} {FS'}
}
{ \reduce H {\{ FS \} \cup P} {H'} {\{ FS' \} \cup P}
}

  \caption{Process transition rules.}
  \label{fig:opsem-rules-process}
\end{figure*}

Process transition rules enable reducing frame stacks, \ie threads;
Figure~\ref{fig:opsem-rules-process} shows the transition rules. We use an
interleaved semantics. Rule \textsc{E-Schedule} non-deterministically selects
and transitions a thread; note that the transition may have side effects on
the heap. Rule \textsc{E-Exit} removes threads with empty frame stacks from
the soup of threads.

\section{Correctness Properties}\label{sec:correctness}

We show that well-typed programs satisfy desirable
properties:
\begin{enumerate}
\item {\em Observable protocol}. For example, a terminated observable never
publishes events again; this protocol property is captured by a {\em heap evolution} invariant.
\item {\em Subject reduction}. Reduction of well-typed programs preserves types.
\end{enumerate}

The proofs of these properties are based on a typing relation, as well as
invariants preserved by reduction. To establish the correctness properties we
have to consider non-interference properties for processes, frame stacks,
frames, and heaps; these properties are shown in
Figure~\ref{fig:non-interference}. The application $obsIds(FS)$ of the $obsIds$
auxiliary function returns the set of all object addresses $o$ in labels
$a(o, \bar{p})$ of the asynchronous frames $FS$ (similarly for a single frame $F$).
The $waiters$ function returns the observable ids of the waiting frames of the
running state of a given observable heap object. For an observable heap object
$H(o) = \langle \texttt{Observable$[\sigma]$}, running(\bar{F}, \bar{S}) \rangle$,
$waiters(H(o)) = obsIds(\bar{F})$. To test whether an observable is currently running
(as opposed to done) we use a simple predicate, $Running$. Finally, to express
disjointness of (sets of) heap addresses we use the symbol $\#$.


\begin{figure*}[ht]
  \centering

\infrule[\textsc{EmpFS-ok}]
{
}
{ H \vdash \epsilon~\textbf{ok}
}

\vspace{0.3cm}

\infrule[\textsc{FS-ok}]
{ H \vdash F~\textbf{ok} \quad H \vdash FS~\textbf{ok} \quad obsIds(F)\#obsIds(FS)
}
{ H \vdash F \circ FS~\textbf{ok}
}

\vspace{0.3cm}

\infrule[\textsc{SF-ok}]
{
}
{ H \vdash F^{s}~\textbf{ok}
}

\vspace{0.3cm}

\infrule[\textsc{CSF-ok}]
{ H \vdash F^{s}~\textbf{ok}
}
{ H \vdash F^{s}_{x}~\textbf{ok}
}

\vspace{0.3cm}

\infrule[\textsc{AF-ok}]
{ Running(H(o))  \\
  \forall o' \in dom(H).~o \notin waiters(H(o')) \land (o \in subscribers(H(o')) \Rightarrow o' \in \bar{p})
}
{ H \vdash F^{a(o, \bar{p})}~\textbf{ok}
}

\vspace{0.3cm}

\infrule[\textsc{H-ok}]
{ \forall o \in dom(H).~H \vdash H(o) \ok \\
  \forall o_1 \neq o_2 \in dom(H).~waiters(H(o_1))\#waiters(H(o_2))
}
{ \vdash H \ok
}

\vspace{0.3cm}

\infrule[\textsc{HO-ok}]
{
}
{ H \vdash \obj C {FM} \ok
}

\vspace{0.3cm}

\infrule[\textsc{DOHO-ok}]
{
}
{ H \vdash \langle \texttt{Observable$[\sigma]$}, done(\bar{S}) \rangle \ok
}

\vspace{0.3cm}

\infrule[\textsc{ROHO-ok}]
{ \forall i \neq j \in \{1..n\}.~obsIds(F_i)\#obsIds(F_j)  \\
  \forall i \in \{1..n\}.~\forall o \in obsIds(F_i).~Running(H(o))
}
{ H \vdash \langle \texttt{Observable$[\sigma]$}, running(F_1, \ldots, F_n, \bar{S}) \rangle \ok
}

\vspace{0.3cm}

\infrule[\textsc{Proc-ok}]
{ H \vdash FS_1 \ok \quad \ldots \quad H \vdash FS_n \ok  \\
  \forall i \neq j \in \{1..n\}.~obsIds(FS_i)\#obsIds(FS_j)
}
{ H \vdash \{ FS_1, \ldots, FS_n \} \ok
}

  \caption{Non-interference properties.}
  \label{fig:non-interference}
\end{figure*}

Rule \textsc{FS-ok} requires that the observable ids in a frame stack are
distinct. Rule \textsc{AF-ok} requires that an observable id $o$ in the label
of an asynchronous frame is not included in the observable ids of the waiters
of any observable in the heap; moreover, if $o$ is included in the observable
ids of the subscribers of another observable $o'$, then $o'$ must be included
in the set of subscriptions of $o$, $\bar{p}$. Rule \textsc{H-ok} requires
that for all observable objects in the heap, the observable ids of the waiters
are disjoint. Rule \textsc{ROHO-ok} requires that a running observable object
(a) has no duplicate observable ids in its waiters, and (b) has only waiters
that refer to running (non-terminated) observables. Finally, rule \textsc{Proc-ok}
requires that the observable ids in the frame stacks of a process are pairwise
disjoint.

To enforce non-interference during evaluation we define a relation between
heaps. The following relation also (a) preserves the types of heap objects and
(b) bounds the observable ids of new running states.

\begin{definition}[Heap Evolution]
Heap $H$ evolves to $H'$ wrt a set of observable ids $B$, written $H \leq_B H'$ if

\begin{enumerate}

\item[(i)] $\forall o \in dom(H').$ if $o \notin dom(H)$ and $H'(o) = \langle \psi, running(\bar{F}, \bar{S}) \rangle$ then $\bar{F} = \bar{S} = \epsilon$, and

\item[(ii)] $\forall o \in dom(H).$
  \begin{itemize}
  \item if $H(o) = \obj C {FM}$ then $H'(o) = \obj C {FM'}$,
  \item if $H(o) = \langle \psi, done(\bar{S}) \rangle$ then $H'(o) = \langle \psi, done(\bar{R} \uplus \{ \langle o', q' \rangle \}) \rangle$ where $\bar{S} = \bar{R} \uplus \{ \langle o', q \rangle \}$, and
  \item if $H(o) = \langle \psi, running(\bar{F}, \bar{S}) \rangle$ then $H'(o) = \langle \psi, running(\bar{F}, \bar{S} \uplus \{ \langle o, [] \rangle \}) \rangle$ or ($H'(o) = \langle \psi, running(\epsilon, \bar{R}) \rangle$ and $dom(\bar{S}) = dom(\bar{R})$) or ($H'(o) = \langle \psi, running(\bar{F} \cup \bar{G}, \bar{S}) \rangle$, \newline $obsIds(\bar{F}) \# obsIds(\bar{G})$ and $obsIds(\bar{G}) \subseteq B$) or $H'(o) = \langle \psi, done(\bar{S}) \rangle$.
  \end{itemize}
\end{enumerate}

\end{definition}

\subsection{Subject Reduction}

The following subject reduction theorem is based on a typing relation that is
given for processes, frame stacks, frames, expressions, and heaps.
Figure~\ref{fig:type-rules-expressions} shows the typing relation for
expressions. It is straight-forward to extend this relation to frames, frame
stacks, and processes.

\begin{theorem}[Subject Reduction]
If $\vdash H \typ \star$ and $\vdash H \ok$ then:
\begin{enumerate}

\item If $H \vdash F \typ \sigma$, $H \vdash F \ok$ and $\freduce H F {H'} {F'}$ then
      $\vdash H' \typ \star$, $\vdash H' \ok$, $H' \vdash F' \typ \sigma$, $H' \vdash F' \ok$,
      and $\forall B.~H \le_B H'$.

\item If $H \vdash FS \typ \sigma$, $H \vdash FS \ok$ and $\fsreduce H {FS} {H'} {FS'}$ then
      $\vdash H' \typ \star$, $\vdash H' \ok$, $H' \vdash FS' \typ \sigma$, $H' \vdash FS' \ok$ and
      $H \le_{obsIds(FS)} H'$.

\item If $H \vdash P \typ \star$, $H \vdash P \ok$ and $\reduce H P {H'} {P'}$ then
      $\vdash H' \typ \star$, $\vdash H' \ok$, $H' \vdash P' \typ \star$ and $H' \vdash P' \ok$.
\end{enumerate}
\end{theorem}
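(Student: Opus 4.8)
The plan is to prove the three statements simultaneously by a single induction on the derivation of the reduction step, with the statement for frames (1) feeding into the statement for frame stacks (2) via rule \textsc{E-Frame}, and (2) feeding into the statement for processes (3) via rule \textsc{E-Schedule}. For each reduction rule I would establish four conclusions in turn: preservation of heap well-typedness ($\vdash H' \typ \star$), preservation of heap well-formedness ($\vdash H' \ok$), preservation of the typing judgement for the reduct, and preservation of the well-formedness judgement for the reduct. The heap-evolution conclusion $H \le_B H'$ is then read off from the explicit heap updates performed by the rule. A helpful first move is to record, for each of the seven asynchronous rules, exactly how $H'$ differs from $H$, and check that each such difference is one of the four alternatives allowed by clause (ii) of the Heap Evolution definition (and that any freshly allocated running object is created empty, matching clause (i)); for the synchronous rules \textsc{E-Var} through \textsc{E-New} this is trivial since only field maps change (or the heap grows by a fresh non-observable, or fresh observable with empty state in \textsc{E-RAsync}).

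For the \textbf{typing part} I would first spell out the (routine) inversion lemmas for the expression typing relation of Figure~\ref{fig:type-rules-expressions} and the canonical-forms facts they yield, then proceed rule by rule. The synchronous rules are standard FJ-style arguments: substitution into local-variable maps, plus a substitution/weakening lemma relating the type of a method body to the declared return type for \textsc{E-Method}/\textsc{E-Return}. The asynchronous rules require tracking the label: an asynchronous frame $\aframe L e {a(o,\bar p)}$ is well-typed when $e$ has the element type $\sigma$ of the observable at $o$ and $\texttt{yield}$ inside it matches $\sigma$; one checks that \textsc{E-Yield}, \textsc{E-Await1}--\textsc{E-Await3}, \textsc{E-RAsync} and \textsc{E-RAsync-Return} all preserve this correspondence (in particular that $\texttt{await}(y)$ on an $\texttt{Observable}[\sigma']$ produces a result of type $\texttt{Option}[\sigma']$, matching the $\texttt{Some}(v)$/$\texttt{None}$ injected by the $resume$ auxiliary function and rules \textsc{E-Await2}/\textsc{E-Await3}).

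The genuinely delicate part is the \textbf{well-formedness part}, i.e.\ re-establishing the \textsc{ok} judgements of Figure~\ref{fig:non-interference} — this is where I expect the main obstacle. The invariants are global: \textsc{H-ok} demands pairwise-disjoint waiter sets across the whole heap, \textsc{Proc-ok} demands pairwise-disjoint $obsIds$ across all frame stacks in the soup, \textsc{FS-ok} demands distinctness within a stack, and \textsc{AF-ok} couples an asynchronous frame's label $o$ to the state of every other heap object (no other object waits on $o$, and any object subscribing to $o$ is listed in $\bar p$). The hard cases are precisely the rules that move a frame between a stack and an observable's waiter set or that create/terminate observables: \textsc{E-Await1} moves frame $F$ (with label containing $o$) out of the running stack and into $waiters(H(o'))$, so one must argue that $o$ was not already a waiter anywhere (from \textsc{AF-ok} on $F$) and hence \textsc{H-ok}'s disjointness is preserved and \textsc{ROHO-ok} for $o'$ still holds; \textsc{E-Yield} and \textsc{E-RAsync-Return} pull whole stacks $R \circ \epsilon$ out of observables' waiter sets back into the process soup, requiring that the $obsIds$ of the resumed stacks were disjoint from everything still running (this is exactly what \textsc{ROHO-ok}'s first clause and \textsc{H-ok} give); and \textsc{E-RAsync} allocates a fresh $o$, pushes $\aframe L e {a(o,\bar p)}$, and registers $o$ as a subscriber of each $p_i$, so one must check \textsc{AF-ok} for the new frame (using freshness of $o$ for the ``no waiter, no subscriber'' clause and the fact that $o$'s subscriptions are exactly $\bar p$), together with \textsc{FS-ok} disjointness of the new label against the rest of the stack. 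The strategy for these cases is to carry the heap-evolution relation $H \le_B H'$ as an auxiliary engine: from $\vdash H \ok$ and the precise shape of the allowed transitions in clause (ii), one can mechanically verify that waiter sets only shrink or grow by fresh/bounded ids (the $B$-bound in the third alternative is what localizes newly added waiters to $obsIds(FS)$ in statement (2), which is why statement (2) tracks $B = obsIds(FS)$ while statement (1) quantifies over all $B$), and that $done$ states only reshuffle subscriber queues — from which all four \textsc{ok} rules follow. I would therefore prove, as a lemma used throughout, that if $\vdash H \ok$, $H \le_B H'$, and the frame/stack/process in question is $\ok$ and its $obsIds$ are appropriately related to $B$, then $\vdash H' \ok$ and the corresponding $\ok$ judgement is preserved; the theorem then reduces to checking that each reduction rule instantiates the hypotheses of this lemma.
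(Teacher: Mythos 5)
Your plan follows essentially the same route as the paper: induction on the derivation of the reduction step at each of the three levels, with part (1) discharging the \textsc{E-Frame} case of part (2) and part (2) discharging the \textsc{E-Schedule} case of part (3), checking type preservation, the \textbf{ok} judgements, and the heap-evolution clauses rule by rule. The paper gives only this inductive skeleton, and your elaboration of the delicate well-formedness cases (\textsc{E-Await1}, \textsc{E-Yield}, \textsc{E-RAsync}, \textsc{E-RAsync-Return}) and of the role of the bound $B = obsIds(FS)$ is consistent with it.
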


\begin{proof}

Part (1) is proved by induction on the derivation of $\freduce H F {H'} {F'}$.
Part (2) is proved by induction on the derivation of $\fsreduce H {FS} {H'}
{FS'}$ and part (1). Part (3) is proved by induction on the derivation of
$\reduce H P {H'} {P'}$ and part (2).

\end{proof}


\begin{figure*}[ht!]
  \centering

\infrule[\textsc{T-Boolean}]
{
}
{ \Gamma ; \Delta \vdash \uline{b} \typ \texttt{Boolean}
}

\vspace{0.3cm}

\infrule[\textsc{T-Int}]
{
}
{ \Gamma ; \Delta \vdash \uline{i} \typ \texttt{Int}
}

\vspace{0.3cm}

\infrule[\textsc{T-Var}]
{ x : \rho \in \Gamma
}
{ \Gamma ; \Delta \vdash x \typ \rho
}

\vspace{0.3cm}

\infrule[\textsc{T-Null}]
{
}
{ \Gamma ; \Delta \vdash \texttt{null} \typ \rho
}

\vspace{0.3cm}

\infrule[\textsc{T-Cond}]
{ \Gamma ; \Delta \vdash x \typ \texttt{Boolean} \andalso \Gamma ; \Delta \vdash e \typ \rho \andalso \Gamma ; \Delta \vdash e' \typ \rho
}
{ \Gamma ; \Delta \vdash \texttt{if}~(x)~\{e\}~\texttt{else}~\{e'\} \typ \rho
}

\vspace{0.3cm}

\infrule[\textsc{T-Field}]
{ ftype(\sigma, f) = \tau
}
{ \Gamma , x : \sigma ; \Delta \vdash x.f \typ \tau
}

\vspace{0.3cm}

\infrule[\textsc{T-New}]
{ fields(C) = \bar{f} \andalso ftype(C, \bar{f}) = \bar{\tau} \andalso \Gamma ; \Delta \vdash \bar{y} : \bar{\tau}
}
{ \Gamma ; \Delta \vdash \texttt{new}~C(\bar{y}) \typ C
}

\vspace{0.3cm}

\infrule[\textsc{T-Invoke}]
{ mtype(\sigma, m) = (\bar{\tau}) \rightarrow \sigma' \andalso \Gamma , x : \sigma \vdash \bar{y} : \bar{\tau}
}
{ \Gamma , x : \sigma ; \Delta \vdash x.m(\bar{y}) \typ \sigma'
}

\vspace{0.3cm}

\infrule[\textsc{T-RAsync}]
{ \Gamma ; \Delta \vdash \bar{y} : \texttt{Observable[}\bar{\tau}\texttt{]} \andalso \Gamma ; \sigma \circ \Delta \vdash e \typ \rho
}
{ \Gamma ; \Delta \vdash \texttt{rasync[}\sigma\texttt{]}(\bar{y})~\{e\} \typ \texttt{Observable[}\sigma\texttt{]}
}

\vspace{0.3cm}

\infrule[\textsc{T-Await}]
{ \Gamma ; \Delta \vdash x \typ \texttt{Observable[}\sigma\texttt{]}
}
{ \Gamma ; \Delta \vdash \texttt{await}(x) \typ \sigma
}

\vspace{0.3cm}

\infrule[\textsc{T-Let}]
{ \Gamma ; \Delta \vdash e \typ \rho \andalso \Gamma , x \typ \rho ; \Delta \vdash t \typ \sigma
}
{ \Gamma ; \Delta \vdash \texttt{let}~x = e~\texttt{in}~t \typ \sigma
}

\vspace{0.3cm}

\infrule[\textsc{T-While}]
{ \Gamma ; \Delta \vdash x \typ \texttt{Boolean} \andalso \Gamma ; \Delta \vdash t \typ \rho
}
{ \Gamma ; \Delta \vdash \texttt{while}~(x)~\{t\} \typ \texttt{Boolean}
}

\vspace{0.3cm}

\infrule[\textsc{T-Yield}]
{ \Gamma ; \sigma \circ \Delta \vdash x \typ \sigma
}
{ \Gamma ; \sigma \circ \Delta \vdash \texttt{yield}(x) \typ \sigma
}

  \caption{Type checking expressions.}
  \label{fig:type-rules-expressions}
\end{figure*}

\bibliographystyle{plain}
\bibliography{bib}

\end{document}